\newtheorem{thm}{Theorem}
\newtheorem{assumption}{Assumption}
\newtheorem{exx}{Example}
\newtheorem{remm}{Remark}
\newenvironment{assume}{\begin{assumption}\rm }{\hfill \hspace*{1pt} \hfill $\lrcorner$ \end{assumption}}
\newenvironment{theorem}{\begin{thm}\rm }{\hfill \hspace*{1pt} \hfill $\lrcorner$ \end{thm}}
\newenvironment{remark}{\begin{remm}\rm }{\hfill \hspace*{1pt} \hfill $\lrcorner$\end{remm}}
\newcommand\real{\ensuremath{{\mathbb R}}}
\newcommand\realn{\ensuremath{{\mathbb{R}^n}}}
\newcommand\tran{\ensuremath{\mathsf{T}}}
\newcommand\mymatrix[2]{\left[\begin{array}{#1} #2 \end{array}\right]}
\newcommand{\calD}{\mathcal{D}}
\newcommand{\calE}{\mathcal{E}}
\newcommand{\calH}{\mathcal{H}}
\newcommand{\calL}{\mathcal{L}}
\newcommand{\calP}{\mathcal{P}}
\newcommand{\calS}{\mathcal{S}}
\newcommand{\grad}{\ensuremath\operatorname{grad}}
\newcommand{\bgrad}{\ensuremath\mathbf{{grad}}}
\newcommand{\dd}[1]{\ensuremath \mathrm{d}{#1}}
\newcommand\inner[2]{\left\langle #1 , #2 \right\rangle}
\newcommand{\bv}{\ensuremath\mathbf{v}}
\newcommand{\bi}{\ensuremath\mathbf{i}}
\newcommand{\bV}{\ensuremath\mathbf{V}}
\newcommand{\bI}{\ensuremath\mathbf{I}}
\newcommand{\bg}{\ensuremath\mathbf{g}}
\newcommand{\bG}{\ensuremath\mathbf{G}}
\newcommand{\bD}{\ensuremath\mathbf{D}}
\newcommand{\one}{\ensuremath \mathbb{1}}
\title{\LARGE \bf
	Gradient modelling of memristive systems
}
\author{Fulvio Forni$^{1}$ and Rodolphe Sepulchre$^{2,1}$
	\thanks{*The research leading to these results has received funding from the European Research Council under the
		Advanced ERC Grant Agreement SpikyControl n.101054323.}
	\thanks{$^{1}$Department of Engineering, University of Cambridge, Trumpington Street, Cambridge CB2 1PZ, United Kingdom
		 {\tt\small f.forni@eng.cam.ac.uk}}%
	\thanks{$^{2}$Department of Electrical Engineering, KU Leuven, KasteelPark Arenberg, 10, B-3001 Leuven, Belgium {\tt\small rodolphe.sepulchre@kuleuven.be}} 
	}
\begin{document}

\maketitle
\thispagestyle{empty}
\pagestyle{empty}

\begin{abstract}
	We introduce a gradient modeling framework for memristive systems. Our focus is on memristive systems as they appear in neurophysiology and neuromorphic systems. Revisiting the original definition of Chua, we regard memristive elements as gradient operators of quadratic functionals with respect to a metric determined by the memristance. We explore the consequences of gradient properties for the analysis and design of neuromorphic circuits.
\end{abstract}


\section{introduction}

Memristive elements -- resistive elements with `memory' -- lie at the core of neuronal modelling and neuromorphic engineering. The ideal element $i=g(q)v$, a conductor whose conductance (or rather, \textit{mem}ductance) depends on charge, was first introduced by Chua as a theoretical construct motivated by a `missing' element next to the ideal resistor, capacitor, and inductor \cite{chua1971}. A few years later, Chua and Wang \cite{chua1976} made the memristor a special case of a broad family of circuit elements called memristive systems. Next to the ideal memristor, memristive systems include the conductance-based models of ion channels introduced in the seminal work of Hodgkin and Huxley \cite{hodgkin1952} to model neuronal excitability. More generally, any biophysical model of neuronal circuits consists of the parallel interconnection of capacitors with a bank of memristive elements and constant voltage sources (batteries). In that sense, memristive elements lie at the core of biophysical modeling in neuroscience.

The ideal memristor remained a theoretical construct until Hewlet-Packard announced to have built a device with memristive properties \cite{strukov2008}. Over the last decade, much research has been devoted to the design of new materials with memristive properties \cite{song2023}. Memristive elements are expected to play a central role in the future development of  neuromorphic --or more generally "in memory" -- computing \cite{yang2013}. 

Despite the importance of memristive elements in both neurophysiology and neuromorphic engineering, we currently lack a theory that includes memristive elements  in circuit design next to the traditional RLC elements.  An exception is the recent PhD dissertation \cite{huijzer2025PhD} and the publications \cite{huizer2023,huijzer2025} that consider behaviors of memristive networks from a system perspective, that is, the system properties of interconnected memristive elements.

Motivated by analysis and design questions in neuromorphic circuit design and control \cite{ribar2021,sepulchre2022}, this article revisits the original model of \cite{chua1976} to highlight the gradient nature of memristive systems. We aim at approaching the physical modeling of memristive systems as a generalization of resistive systems. The gradient properties of resistors have been at the core of extending the linear of RLC circuits to circuits with nonlinear resistors. This approach goes back to the seminal work by Brayton and Moser \cite{brayton1964}, but is also extensively developed in the systems and control literature. Gradient-based modelling of physical models was initially explored by Brockett \cite{brockett1977}, who developed the concept of input-output gradient systems (or input-output Hamiltonian systems). Key references include the comprehensive article \cite{Jeltsema2009}, as well as Chapter 14 in the monograph \cite{vanderschaft2014}.

The proposed angle of attack in this paper is to regard memristive systems not as gradient vector fields in the state-space, but instead as gradient operators in the space of past trajectories. The gradient operator is a relationship between the past current and voltage trajectories of the memristive element. It reduces to the familiar gradient model of a resistive element in the limit of a memoryless element, that is, when the past reduces to the present.  When memristive elements are connected to capacitors, we show that they lead to gradient behaviors, in the sense that (past) input-output trajectories can be regarded as critical points of a gradient flow in the space of trajectories. In the memoryless limit of resistive elements, we recover the classical gradient system property of RC circuits, that is, the state-space behavior is a gradient flow. 

The paper is organised as follows.  Section 2 revisits the original definition of Chua and Wang \cite{chua1976} to highlight two key properties of memristive systems: fading memory and dissipation. Section 3 proceeds with the reformulation of a memristive model as the gradient of a quadratic functional with respect to a Riemannian metric determined by the memristance. Section 4 extends the gradient modelling framework to memRC circuits while Section 5 illustrates the proposed modelling framework on the celebrated model of Hodgkin and Huxley.

\section{Memristive systems} \label{behaviours}
\subsection{Definition and examples}
We adopt the definition of Chua and Wang \cite{chua1976}: a voltage-controlled memristive one-port is represented by the state-space model
\begin{subequations}
\label{eq:memristive}
\begin{align}
    \dot x = f(x,v)  \\
    i = g(x,v)v
\end{align}
\end{subequations}
where $g(x,v)$ is called the {\it memconductance} of the element. The state-space equation models the voltage-dependent memory of the memconductance. 

Memristive systems generalize the memristor introduced by Chua in \cite{chua1971}, with state-space model $\dot x = v$. One motivation to introduce memristive system was to include the memristor as an ideal element of a broader family of electrical elements that would include the ionic currents introduced by Hodgkin and Huxley in \cite{hodgkin1952}. 

\begin{exx}
[Potassium current model]
The potassium current of Hodgkin and Huxley model \cite{hodgkin1952} is
\begin{align}
    \dot n &= \alpha(v)n + \beta(v)(1-n) \nonumber \\
    i &= g_K n^4 (v-v_K)
\end{align}
The parameter $g_K$ is the maximal conductance. The state $n \in [0,1]$ is called a gating variable. The state $n=0$ models the closed state of the gate whereas the state $n=1$ models the fully open state of the gate. The state equation can be regarded as the mean-field kinetic equation that models stochastic transitions between the open and closed state, with voltage dependent probability of opening ($\alpha$) and closing ($\beta)$. The parameter $v_K$ is called the Nernst potential of the potassium current. it can be regarded as a constant battery in series with the memristive element.   \hfill $\lrcorner$
\end{exx}

The memristive nature of the potassium ionic current is a general property of conductance-based neuronal models. A neuronal model includes possibly many different types of ionic currents, all obeying the structure of a memristive element in series with a constant battery.

Synaptic currents are also memristive. They model interconnections between a presynaptic neuron $v_{pre}$ and a postsynaptic neuron $v_{post}$. 
\begin{exx}[Synaptic current model]
A common model of synaptic current has the expression
\begin{align}
    \dot s &= \alpha(v_{pre})s + \beta(v_{pre})(1-s) \nonumber \\
    i & = g_{syn} s (v_{post}-v_{syn})
\end{align}
The memristive element is of the type 
\begin{align}
\left ( \begin{array}{c} i_{pre} \\ i_{post}\end{array} \right )=\left ( \begin{array}{cc} 0 & 0 \\ 0 & g_{syn}s \end{array} \right ) \left ( \begin{array}{c} v_{pre} \\ v_{post}\end{array} \right ) 
\end{align}  \hfill $\lrcorner$
\end{exx} 

The neuronal model of a general conductance-based model is the parallel interconnection of a leaky capacitor modelling the passive membrane with a bank of internal and external ionic currents, each modelled as the series interconnection of a memristive element with a battery. This is our primary source of interest for memristive modelling in this paper.

\subsection{Fading memory}

Fading memory is a key property that was formalized by Boyd and Chua in later work in \cite{Boyd1985} but guarantees the {\it well-posedness} of state-space memristive model to ensure a number of input-output properties analyzed in \cite{chua1976}: DC characteristics, limiting linear characteristics, small signal characteristics, etc.

Fading memory ensures that the memory of the memconductance is fading with time, that is, does not depend on the distant past of the voltage. Because of  time-invariance and causality of the state-space model, it is sufficient to define the continuity property for the memory functional from the past voltage to the current output. Fading memory assumes continuity of this functional with respect to faded past signals. 

\begin{assume}\emph{[Fading memory]}
\label{assumption:fading}
The state-space model (\ref{eq:memristive}) defines a memory functional 
$g_0 \in  \calL_{2(-\infty,0]}  \rightarrow \real$  on the space of past signals $\bv \in \calL_{2(-\infty,0]}$ 
\begin{align}
\bv \rightarrow g_0(\mathbf{v})=g(x(0),v(0)).
\end{align}

The memory functional has (exponential) fading memory, that is, it is continuous in the space $\calL_{2(-\infty,0]}$ equipped with the weighted quadratic norm induced by the inner product
\begin{equation}
\inner{\bv_1}{\bv_2}_\lambda = 
\int_{-\infty}^0 \bv_1(t) \bv_2(t) e^{2 \lambda t} \dd{t}
\end{equation}
for some given $\lambda>0$ and generic $\bv_1,\bv_2 \in \calL_{2(-\infty,0]}$. 
\end{assume}

Consider now the signal space 
$\calL_{2(-\infty,T]}$ for some $T\in \real$.
By time-invariance, 
the memory functional $g_0$
\emph{defines} the memconductance 
$\bg \in \calL_{2(-\infty,T]} \to \calL_{\infty}$ as follows. 
For all $\bv \in \calL_{2(-\infty,T]}$ and all $t \in (-\infty,T]$,
\begin{align}
\label{eq:memfunctional2memconductance}
\bg(\bv)(t) = g_0(\Delta_{t}(\bv))
\end{align}
where $\Delta_{t}$ is the time-shift and projection 
operator 
\begin{align}
\Delta_{t}(\bv)(\cdot) = \bv(\cdot+t) \in \calL_{2(-\infty,0]}.
\end{align}

A memristive system can thus be regarded as  a voltage-dependent instantaneous mapping relating input voltage $\bv\in\calL_{2(-\infty,T]}$ to output current $\bi\in \calL_{2(-\infty,T]}$ given by
\begin{equation}
\label{eq:memristor}
\bi(t) \, = \, \bg(\bv)(t) \, \bv(t),
\end{equation}
where $t \in (-\infty,T] \subseteq \real$ and 
$\bg \in \calL_{2(-\infty,T]} \to \calL_\infty$
is the memconductance fading memory operator defined by the memory functional $g_0$.

In what follows, for simplicity of notation, we assume fading memory with $\lambda =0$, that is, with respect to the usual norm, but all the results in the paper readily extend to a weighted norm. Also, whenever possible,
we will drop the dependence on time from the equations. To avoid confusion, we will
use bold symbols for signals and operators, and nonbold symbols to denote \emph{their evaluation at a generic time} $t$. 
For instance, this allows us to write \eqref{eq:memristor} as
\begin{equation}
\label{eq:memristor2}
i \, = \, g(\bv) \, v.
\end{equation}

\eqref{eq:memfunctional2memconductance} guarantees causality of the menconductance, that is, of the voltage-current relationship. The memconductance is thus an instantaneous mapping between voltage and current that depends on the voltage \emph{past}, modeling memory effects in the circuit. 
\subsection{Dissipativity}

The memconductance of a memristive element is {\it positive}.
\begin{assume}\emph{[Energy dissipation]}
\label{assumption:causality_and_invertibility}
For all $\bv \in \calL_{2(-\infty,T]}$ 
and $t\in (-\infty,T]$,
\begin{equation}
 \label{eq:assume_invertibility} 
\bg(\bv)(t)  \geq \varepsilon > 0,
\end{equation}
for some (uniform) constant $\varepsilon$.
\end{assume}

This second assumption is key to the dissipativity properties identified in \cite{chua1976}: passivity, no energy discharge, double-valued Lissajou property, local passivity, etc.

The instantaneous power  $\bi(t)\bv(t) \ge 0 $ is nonnegative at any time. The element is passive \cite{Desoer1975}, dissipates energy at any time, and has no storage. The dissipated energy is
\begin{align}
\int_{-\infty}^T \bv^\tran(t) \bg(\bv)(t) \bv(t) \dd{t} \ge 0
\end{align}
which can be regarded as the inner product $\inner{\bv}{\bi}$
with respect to the  the inner product 
\begin{equation}
\inner{\bv_1}{\bv_2} = 
\int_{-\infty}^T \bv_1(t) \bv_2(t)  \dd{t}
\end{equation}
where $\bv_1, \bv_2$ are generic signals in $\calL_{2(-\infty,T]}$. 

From a geometric perspective, if we endow $\calL_{2(-\infty,T]}$ with the structure of a Riemannian manifold
$(\calL_{2(-\infty,T]},\bg)$, the dissipated energy of a
memristive element corresponds to
the quadratic norm $\parallel \!\!  \bv  \!\! \parallel^2_g $ associated to the weighted inner product 
\begin{equation}
\inner{\bv_1}{\bv_2}_g = \int_{-\infty}^T \bv_1(t) \bg(\bv)(t) \bv_2(t)   \dd{t} \geq 0,
\end{equation}
where $\bv_1$ and $\bv_2 \in T_\bv \calL_{2(-\infty,T]}$
are generic tangent vectors. 
This interpretation of the memconductance as a Riemannian metric is the basis of the gradient modelling in the next section.

\section{Gradient models of memristive elements}
\label{section:memristive_elements}

\subsection{Gradient models of resistive elements}
\label{section:resistive_elements_characterization}
As a first step, we revisit the gradient modelling of a resistive element, that is, when the memconductance $g$ is a constant (linear resistor) or a static function $g(v)$ (nonlinear resistor).

For a linear resistor 
\begin{align}
\label{eq:linear_resistor}
i=gv,
\end{align}
the conductance $g \in \real_{>0}$ is constant and the energy characterization is through the instantaneous (dissipated) power
\begin{subequations}
\label{eq:linear_resistor_dissipation_and_cocontent}
\begin{align}
\label{eq:linear_resistor_dissipation}
\calD(v) := \inner{i}{v} = g v^2  = \inner{v}{v}_g, 
\end{align}
for all $v \in \real$, where we have made use of the weighted inner product
$\inner{v_1}{v_2}_g =  g v_1 v_2$, for
$v_1,v_2 \in \real$.

An equivalent energy characterization is through  the resistive co-content \cite{Jeltsema2009} 
\begin{align}
\label{eq:linear_resistor_cocontent}
\calE(\bar{v}) 
:= \!\int_0^{\bar{v}}\inner{i}{\dd{v}} 
= \!\int_0^{\bar{v}}   gv \dd{v} 
= \frac{1}{2}\inner{\bar{v}}{\bar{v}}_g
\end{align}
\end{subequations}
for all $\bar{v} \in \real$.

Dissipation and co-content completely determine 
the resistive relationship between current and voltage:
given $\calD$ in \eqref{eq:linear_resistor_dissipation},
\eqref{eq:linear_resistor} is the unique solution to the 
identity $\inner{i}{v} = \calD(v)$. 
Likewise, given \eqref{eq:linear_resistor_cocontent},
the gradient of the co-content \cite{Lee1997,Absil2008} satisfies 
\begin{align}
\label{eq:linear_resistor_grad}
\inner{\grad \calE(v)}{v} = D_v \calE(v) = gv^2
\end{align}
where $D_w \calE(v)$ is the directional derivative 
of $\calE$ at $v$ in the direction of $w$. 
This leads to 
\begin{align}
\label{eq:linear_resistor_grad}
i = \grad \calE(v) = gv.
\end{align}

To extend the energy characterization to the nonlinear case
\begin{align}
\label{eq:nonlinear_resistor}
i = g(v)v,
\end{align}
we regard $g:\real \to \real$ as a Riemaniann tensor, 
leading to the Riemannian manifold $(\real,g)$ with
inner product
$\inner{v_1}{v_2}_{g} = g(v)v_1 v_2$,
where $v_1,v_2 \in T_v \real$ are generic tangent vectors at $v\in \real$.

The instantaneous power at $v\in\real$ still satisfies
\begin{subequations}
\label{eq:nonlinear_resistor_dissipation_and_cocontent}
\begin{align}
\label{eq:nonlinear_resistor_dissipation}
\calD(v) := \inner{i}{v} = g(v)v^2 = \inner{v}{v}_g, 
\end{align}
but the resistive co-content at $\bar{v} \in \real$
\begin{align}
\label{eq:nonlinear_resistor_cocontent}
\calE(\bar{v}) 
:= \!\int_0^{\bar{v}}\inner{i}{\dd{v}} 
= \!\int_0^{\bar{v}} g(v)v \dd{v} 
\neq \frac{1}{2}\inner{\bar{v}}{\bar{v}}_g 
\end{align}
\end{subequations}
is no longer equivalent to the instantaneous dissipation.

However, both continue to determine the element's behavior. 
Given $\calD$ at $v$ in \eqref{eq:nonlinear_resistor_dissipation}, 
$i=g(v)v$
is the unique solution to 
$\inner{i}{v} = \calD(v)$. Likewise,
given \eqref{eq:nonlinear_resistor_cocontent} at $v$,
\begin{align}
\label{eq:nonlinear_resistor_grad}
\inner{\grad \calE(v)}{v}= D_v\calE(v) = g(v)v^2
\end{align}
where the last identity follows from the 
fact that directional derivative of the co-content at
$\bar{v}$ in the direction $w$ is
\begin{align}
D_w \calE(\bar{v}) = 
\lim_{\epsilon \to 0}
\frac{1}{\epsilon}\int_{\bar{v}}^{\bar{v}+\epsilon w}g(v)v dv
=
g(\bar{v})\bar{v} w.
\end{align}
This leads to 
\begin{equation}
i = \grad \calE(v) = g(v)v.
\end{equation}

The linear equivalence between dissipation and co-content can however be recovered by regarding  the 
voltage space as a Riemannian manifold with metric $\frac{1}{g}$. 
As suggested in Figure \ref{fig:linearization}, the `nonlinear' area element linearizes the infitesimal relationship between current and voltage.  
At each $v\in \real$, this leads to the 
$\frac{1}{g}$-weighted dissipation 
\begin{subequations}
\label{eq:nonlinear_resistor_manifold_dissipation_and_cocontent}
\begin{align}
\label{eq:nonlinear_resistor_manifold_dissipation}
\calD_{\frac{1}{g}}(v)
:= \inner{i}{v}_{\frac{1}{g}} = v^2 = \inner{v}{v}
\end{align}
and $\frac{1}{g}$-weighted co-content
\begin{align}
\label{eq:nonlinear_resistor_manifold_cocontent}
\calE_{\frac{1}{g}}(\bar{v}) 
:= \!\int_0^{\bar{v}}\!\!\!\!\inner{i}{\dd{v}}_{\frac{1}{g}} 
= \!\int_0^{\bar{v}} \!\!\!\!  v \dd{v} 
= \frac{1}{2} \bar{v}^2
= \frac{1}{2}\inner{\bar{v}}{\bar{v}}.
\end{align}
\end{subequations}

Thanks to the Riemannian perspective, the definition of 
$\calD_{\frac{1}{g}} $ and $\calE_{\frac{1}{g}}$
is the 
same for linear and nonlinear resistors. 
The constructions
\eqref{eq:nonlinear_resistor_manifold_dissipation_and_cocontent}
are similar to \eqref{eq:linear_resistor_dissipation_and_cocontent}
and
\eqref{eq:nonlinear_resistor_dissipation_and_cocontent},
but start from the Riemannian inner product
$\inner{\cdot}{\cdot}_{\frac{1}{g}}$ to retain the quadratic expression
$\inner{v}{v}$.

\begin{figure}[htbp]
    \centering
    \includegraphics[width=.94\columnwidth]{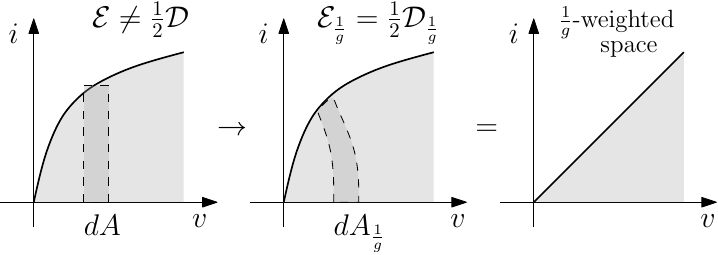}
    \caption{The `linearized' area element due to the metric $\frac{1}{g}$.}
    \label{fig:linearization}
\end{figure}

Crucially, 
$\calD_{\frac{1}{g}} $ and $\calE_{\frac{1}{g}}$ 
continue to determine the behavior of the element.
Using $\grad_{\frac{1}{g}}$ to denote the
Riemannian gradient in $(\real, \frac{1}{g})$,
we have the following result.
\begin{theorem}
\label{theorem:resistive_gradient}
Given the Riemannian manifold $(\real, \frac{1}{g})$,
consider $\calD_{\frac{1}{g}} = v^2 $ and $\calE_{\frac{1}{g}} = \frac{1}{2}v^2$.
Then, the current/voltage relationship of the 
nonlinear resistor \eqref{eq:nonlinear_resistor} at any $v$ satisfies 
\begin{subequations}
\begin{align}
\label{eq:rimeannian_resistor_current1}
\inner{i}{v}_{\frac{1}{g}} = \calD_{\frac{1}{g}}(v)
\end{align}
and
\begin{align}
\label{eq:rimeannian_resistor_current2}
i = \grad_{\frac{1}{g}} \calE_\frac{1}{g}(v).
\end{align}
\end{subequations}
\end{theorem}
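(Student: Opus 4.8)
The plan is to verify both identities by directly unwinding the definition of the weighted inner product $\inner{\cdot}{\cdot}_{\frac{1}{g}}$ and of the Riemannian gradient on $(\real,\frac{1}{g})$. No approximation or limiting argument is needed, since in one dimension everything reduces to elementary algebra once the metric weight is tracked correctly.

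First I would establish \eqref{eq:rimeannian_resistor_current1}. By definition of the $\frac{1}{g}$-weighted inner product at $v$, we have $\inner{i}{v}_{\frac{1}{g}} = \frac{1}{g(v)}\, i\, v$. Substituting the constitutive relation $i = g(v)v$ from \eqref{eq:nonlinear_resistor} cancels the metric weight and yields $\inner{i}{v}_{\frac{1}{g}} = v^2 = \calD_{\frac{1}{g}}(v)$, which is exactly the claimed identity.

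Next I would prove \eqref{eq:rimeannian_resistor_current2}. The Riemannian gradient $\grad_{\frac{1}{g}}\calE_{\frac{1}{g}}(v)$ is the unique tangent vector satisfying $\inner{\grad_{\frac{1}{g}}\calE_{\frac{1}{g}}(v)}{w}_{\frac{1}{g}} = D_w \calE_{\frac{1}{g}}(v)$ for every tangent vector $w$. Since $\calE_{\frac{1}{g}}(v) = \frac{1}{2}v^2$, the directional derivative is simply $D_w \calE_{\frac{1}{g}}(v) = v\,w$. Expanding the left-hand side with the metric gives $\frac{1}{g(v)}\,\grad_{\frac{1}{g}}\calE_{\frac{1}{g}}(v)\,w$, and equating coefficients of the arbitrary $w$ forces $\grad_{\frac{1}{g}}\calE_{\frac{1}{g}}(v) = g(v)\,v = i$, as required.

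The only delicate point --- and in fact the entire content of the theorem --- is the role of the inverse metric weight. Raising the index of the Euclidean derivative $v\,w$ with the metric $\frac{1}{g}$ multiplies the ordinary gradient $v$ of $\calE_{\frac{1}{g}}$ by $g(v)$, so the non-Euclidean geometry is precisely what converts the clean quadratic co-content $\frac{1}{2}v^2$ back into the physical current $g(v)v$. I would emphasize that this bookkeeping of the factor $g$ is exactly what makes the definitions of $\calD_{\frac{1}{g}}$ and $\calE_{\frac{1}{g}}$ identical in form to the linear case in \eqref{eq:linear_resistor_dissipation_and_cocontent} while still reproducing the nonlinear constitutive law, so no separate treatment of the linear and nonlinear resistor is needed.
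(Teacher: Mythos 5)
Your proof is correct and takes essentially the same route as the paper's: direct substitution of $i = g(v)v$ into the weighted inner product for \eqref{eq:rimeannian_resistor_current1}, and the defining property of the Riemannian gradient applied to $\calE_{\frac{1}{g}}(v) = \frac{1}{2}v^2$ for \eqref{eq:rimeannian_resistor_current2}. If anything, your version is marginally more careful, since you test the gradient identity against an arbitrary tangent direction $w$ rather than only against $v$ itself as the paper does (the latter suffices here only because the tangent space is one-dimensional and the case $v=0$ is trivial).
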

\begin{proof}
\begin{subequations}
For \eqref{eq:rimeannian_resistor_current1}, computing
$\inner{i}{v}_{\frac{1}{g}}$ at $v$ for $i = g(v)v$ 
we get 
 \begin{align}
\inner{i}{v}_{\frac{1}{g}} = \frac{1}{g(v)}g(v)vv = v^2.
\end{align}
 For \eqref{eq:rimeannian_resistor_current2}, 
 recall that the Riemannian gradient at $v$
 satisfies 
\begin{align}
\label{eq:nonlinear_resistor_manifold_grad1}
\inner{\grad_{\frac{1}{g}} \calE_{\frac{1}{g}}(v)}{v}_{\frac{1}{g}} \!\! = D_v \calE_{\frac{1}{g}} (v) = v^2.
\end{align}
That is, 
\begin{align}
\label{eq:nonlinear_resistor_manifold_grad2}
\grad_{\frac{1}{g}} \calE_{\frac{1}{g}} (v) = g(v)v,
\end{align}
which establishes \eqref{eq:rimeannian_resistor_current2}.
\end{subequations}
\end{proof}

\subsection{From resistive to memristive elements}
\label{section:memresistive_elements_characterization}

We will now proceed in analogy with the resitive case, replacing the instantaneous inner product 
by the signal inner product 
\begin{equation}
\inner{\bv_1}{\bv_2} = 
\int_{-\infty}^T \bv_1(t) \bv_2(t)  \dd{t}
\end{equation}
where $\bv_1, \bv_2$ are generic signals in $\calL_{2(-\infty,T]}$. 

In analogy with the resistive case, we endow $\calL_{2(-\infty,T]}$ with the structure of a
Riemannian manifold $(\calL_{2(-\infty,T]},\frac{1}{\bg})$. 
The inner product 
$\inner{\cdot}{\cdot}_{\frac{1}{\bg}}$
at $\bv \in \calL_{2(-\infty,T]}$ 
is given by
\begin{equation}
\inner{\bv_1}{\bv_2}_{\frac{1}{\bg}} = 
\int_{-\infty}^T \frac{\bv_1(t) \bv_2(t)}{\bg(\bv)(t)}  \dd{t}
\end{equation}
where $\bv_1,\bv_2 \in T_\bv \calL_{2(-\infty,T]}$
are generic tangent vectors. 
\eqref{eq:assume_invertibility} guarantees the convergence of the integral.

We can then derive the $\frac{1}{\bg}$-weighted dissipation
and $\frac{1}{\bg}$-weighted co-content of the element, as follows.

Given the memristor \eqref{eq:memristor},
under Assumption \ref{assumption:causality_and_invertibility},
for all $\bv \in \calL_{2(\infty,T]}$, 
\begin{subequations}
\label{eq:memresisto_dissipation_and_cocontent}
\begin{align}
\label{eq:memristor_dissipation}
\calD_{\frac{1}{\bg}}(\bv) 
&:= \inner{\bi}{\bv}_{\frac{1}{\bg}} = \inner{\bv}{\bv}. 
\end{align}
Furthermore, for all $\bar{\bv} \in \calL_{2(\infty,T]}$, 
\begin{align}
\label{eq:memristor_cocontent}
\calE_{\frac{1}{\bg}}(\bar{\bv}) 
&:= \!\int_0^{\bar{\bv}}\!\!\!\!\inner{\bi}{\dd{\bv}}_{\frac{1}{\bg}} 
= \frac{1}{2}\inner{\bar{\bv}}{\bar{\bv}}.
\end{align}
\end{subequations}

\begin{subequations}
The identity in \eqref{eq:memristor_dissipation} is straightforward.
\begin{align}
\label{eq:memristor_dissipation_proof}
 \inner{\bi}{\bv}_{\frac{1}{\bg}} 
&= \int_{-\infty}^T \frac{1}{\bg(\bv)(t)} \bg(\bv)(t) \bv(t)^2\dd{t}
\nonumber \\
&= \int_{-\infty}^T \bv(t)^2 \dd{t} 
= \inner{\bv}{\bv}. 
\end{align}
For \eqref{eq:memristor_cocontent}, we have
\begin{align}
\label{eq:memristor_cocontent_proof}
\int_0^{\bar{\bv}}  \inner{\bi}{\dd{\bv}}_{\frac{1}{\bg}} 
&= \int_0^{\bar{\bv}}  \int_{-\infty}^T \frac{1}{\bg(\bv)(t)} \bg(\bv)(t) \bv(t) \dd{\bv}(t) \dd{t}\nonumber \\
&= \int_0^{\bar{\bv}}  \int_{-\infty}^T \bv(t) \dd{\bv}(t) \dd{t} 
= \int_0^{\bar{\bv}}  \inner{\bv}{\dd{\bv}} \nonumber \\
&= \frac{1}{2} \int_0^{\bar{\bv}} \bD_{\dd{\bv}}\inner{\bv}{\bv} 
= \frac{1}{2}\inner{\bar{\bv}}{\bar{\bv}} 
\end{align}
where $\bD_{\dd{\bv}}$ is the directional derivative in the direction $\dd{\bv}$.
\end{subequations}
\begin{remark}
it is instructive to revisit the derivation \eqref{eq:memristor_cocontent_proof}
using a generic but explicit parameterization of the voltage path.
For instance, for $s\in[0,1]$, consider any piecewise differentiable parameterization
$\bv_s \in \calL_{2(-\infty,T]}$ such that
$\bv_0 = 0$ and $\bv_1 = \bar{\bv}$. For each $s$, the memristive
current $\bi_s$ satisfies $i_s = g(\bv_s) v_s$.
Then, 
\begin{align}
\int_0^{\bar{\bv}} \!\!\!\! \inner{\bi}{\dd{\bv}}_{\frac{1}{\bg}} 
& = \! \int_0^1 \! \inner{\bi_s}{\frac{\dd{}}{\dd{s}}\bv_s}_{\frac{1}{\bg}} \!\! \dd{s} 
= \int_0^1 \!\! \inner{\bv_s}{\frac{\dd{}}{\dd{s}}\bv_s} \dd{s} \nonumber \\  
& = \frac{1}{2}\int_0^1 \! \frac{\dd{}}{\dd{s}} \inner{\bv_s}{\bv_s} \dd{s}  
= \frac{\inner{\bv_1}{\!\bv_1}}{2} - \frac{\inner{\bv_0}{\!\bv_0}}{2} \nonumber \\
& = \frac{1}{2} \inner{\bar{\bv}}{\bar{\bv}}. 
\end{align}
\end{remark}

\eqref{eq:memresisto_dissipation_and_cocontent} show an
exact symmetry between resistors and memristors, modulo the change  from $\real$ to $\calL_{2(-\infty,T]}$. 
Dissipation $\calD_{\frac{1}{\bg}}$ and co-content $\calE_{\frac{1}{\bg}}$
are constructed in the same way and are represented by  the same intrinsic expressions.
Crucially, they also determine the memristive behavior. That is,
we can extend Theorem \ref{theorem:resistive_gradient} 
to memristors.

\begin{theorem}
\label{theorem:memresistive_gradient}
Given the Riemannian manifold $(\calL_{2(-\infty,T]}, \frac{1}{\bg})$,
consider $\calD_{\frac{1}{\bg}} = \inner{\bv}{\bv} $ and 
$\calE_{\frac{1}{\bg}} = \frac{1}{2}\inner{\bv}{\bv}$.
Then, the current/voltage relationship of the 
memristor \eqref{eq:memristor} at any $\bv$ satisfies 
\begin{subequations}
\begin{align}
\label{eq:memristor_current1}
\inner{\bi}{\bv}_{\frac{1}{\bg}} = \calD_{\frac{1}{\bg}}(\bv)
\end{align}
and
\begin{align}
\label{eq:memristor_current2}
\bi = \bgrad_{\frac{1}{\bg}} \calE_\frac{1}{\bg}(\bv).
\end{align}
\end{subequations}
\end{theorem}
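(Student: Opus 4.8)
The plan is to mirror the proof of Theorem~\ref{theorem:resistive_gradient} line by line, transporting each step from the finite-dimensional manifold $(\real,\frac{1}{g})$ to the signal manifold $(\calL_{2(-\infty,T]},\frac{1}{\bg})$. The statement splits into two independent claims, and I would dispatch them in the same order as the resistive case: first the dissipation identity \eqref{eq:memristor_current1}, then the gradient identity \eqref{eq:memristor_current2}. The candidate for the gradient is exhibited explicitly, namely the current $\bi = \bg(\bv)\bv$ itself, which sidesteps any abstract existence question in infinite dimensions: it suffices to verify that this candidate satisfies the defining variational identity.

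For \eqref{eq:memristor_current1} there is nothing new to do, since the computation \eqref{eq:memristor_dissipation_proof} already establishes $\inner{\bi}{\bv}_{\frac{1}{\bg}} = \inner{\bv}{\bv}$ by substituting $\bi = \bg(\bv)\bv$ and using the pointwise cancellation $\frac{1}{\bg(\bv)(t)}\,\bg(\bv)(t) = 1$, which is legitimate by Assumption~\ref{assumption:causality_and_invertibility}. As $\calD_{\frac{1}{\bg}}(\bv) := \inner{\bv}{\bv}$, this is exactly \eqref{eq:memristor_current1}. For the gradient identity \eqref{eq:memristor_current2} I would invoke the defining property of the Riemannian gradient on $(\calL_{2(-\infty,T]},\frac{1}{\bg})$: the vector $\bgrad_{\frac{1}{\bg}}\calE_{\frac{1}{\bg}}(\bv)$ is the unique tangent vector satisfying $\inner{\bgrad_{\frac{1}{\bg}}\calE_{\frac{1}{\bg}}(\bv)}{\mathbf{w}}_{\frac{1}{\bg}} = D_\mathbf{w}\calE_{\frac{1}{\bg}}(\bv)$ for every tangent direction $\mathbf{w}$. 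Because \eqref{eq:memristor_cocontent} identifies the co-content with the fixed quadratic functional $\calE_{\frac{1}{\bg}}(\bv) = \frac{1}{2}\inner{\bv}{\bv}$ written in the \emph{standard} inner product, its directional derivative is simply $D_\mathbf{w}\calE_{\frac{1}{\bg}}(\bv) = \inner{\bv}{\mathbf{w}}$. It then remains to verify that the candidate reproduces this: computing $\inner{\bi}{\mathbf{w}}_{\frac{1}{\bg}} = \int_{-\infty}^T \frac{1}{\bg(\bv)(t)}\,\bg(\bv)(t)\,\bv(t)\,\mathbf{w}(t)\,\dd{t} = \inner{\bv}{\mathbf{w}}$ via the same cancellation shows that $\bi$ satisfies the gradient identity for all $\mathbf{w}$, whence $\bi = \bgrad_{\frac{1}{\bg}}\calE_{\frac{1}{\bg}}(\bv)$ by uniqueness.

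The one point that requires care -- and which I expect to be the main conceptual obstacle rather than a computational one -- is the base-point dependence of the metric $\frac{1}{\bg}$: the weight $\frac{1}{\bg(\bv)(t)}$ varies with $\bv$ through the memconductance operator, so a careless differentiation of a metric-weighted quantity would generate spurious terms from the variation of the metric itself. The resolution, which I would state explicitly, is that $\calE_{\frac{1}{\bg}}$ is \emph{not} a metric-weighted quantity at the point where the derivative is taken: \eqref{eq:memristor_cocontent} shows it coincides with $\frac{1}{2}\inner{\bv}{\bv}$ in the unweighted inner product, so the metric dependence has already been absorbed when integrating along the voltage path. Consequently the directional derivative is that of an ordinary quadratic form, $\inner{\bv}{\cdot}$, and the metric enters exactly once -- when this differential is converted into a tangent vector by raising the index with the inverse metric $\bg$ -- which is precisely the operation that turns $\bv$ into $\bi = \bg(\bv)\bv$.
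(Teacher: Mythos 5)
Your proposal is correct and follows essentially the same route as the paper's proof: for \eqref{eq:memristor_current1} you substitute $\bi=\bg(\bv)\bv$ and use the pointwise cancellation $\frac{1}{\bg(\bv)(t)}\,\bg(\bv)(t)=1$, and for \eqref{eq:memristor_current2} you verify that the same candidate satisfies the defining variational identity of the Riemannian gradient of the co-content, whose directional derivative is that of an unweighted quadratic. If anything, you are slightly more careful than the paper, which checks the gradient identity only in the single direction $\bv$ before asserting $\bgrad_{\frac{1}{\bg}}\calE_{\frac{1}{\bg}}(\bv)=\bg(\bv)\bv$, whereas you test against an arbitrary tangent direction $\mathbf{w}$ and conclude by uniqueness, which is what the characterization of the gradient actually requires.
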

\begin{proof}
\begin{subequations}
For \eqref{eq:memristor_current1}, 
\begin{align}
 \inner{\bi}{\bv}_{\frac{1}{\bg}} \
&= \!\int_{-\infty}^T \frac{\bi(t)\bv(t)}{\bg(\bv)(t)}  \dd{t}
\stackrel{\mbox{\eqref{eq:memristor}}}{=} \!
\int_{-\infty}^T \!\! \bv(t)^2 \dd{t} 
= \calD_{\frac{1}{\bg}}(\bv). 
\end{align}
 For \eqref{eq:memristor_current2}, 
 recall that 
\begin{align}
\label{eq:memristor_grad_identities}
\inner{\bgrad_{\frac{1}{\bg}} \calE_{\frac{1}{\bg}}(\bv)}{\bv}_{\frac{1}{\bg}} \!\! = \bD_\bv \calE_{\frac{1}{\bg}} (\bv) = \inner{\bv}{\bv},
\end{align}
where the first identity follows from the definition of
Riemannian gradient at $v$, and the second identity follows from
the application of the directional derivative. In fact,
the directional derivative of $\calE_{\frac{1}{\bg}}$ in the 
generic direction $\bv\in \calL_{2(-\infty,T]}$ computed at $\bv\in \calL_{2(-\infty,T]}$
satisfies
\begin{align}
\bD_\bv \calE_{\frac{1}{\bg}}(\bv)
&= \lim_{\epsilon\to 0}  \frac{1}{2\epsilon}\int_{-\infty}^T \!\!\! (\bv(t)+ \epsilon \bv(t))^2 
- v^2 \dd{t} \nonumber \\
&= \lim_{\epsilon\to 0}  \frac{1}{\epsilon}\int_{-\infty}^T \!\!\! \epsilon \bv(t)^2 
+ \epsilon^2\bv(t)^2 \dd{t} \nonumber \\
&= \inner{\bv}{\bv}.
\end{align}
Therefore,
\begin{align}
\bgrad_{\frac{1}{\bg}} \calE_{\frac{1}{\bg}} (\bv) = \bg(\bv)(\cdot)\bv(\cdot),
\end{align}
which establishes \eqref{eq:memristor_current2}.
\end{subequations}
\end{proof}

We conclude the section by observing that the gradient characterization of memristive elements reduces to the gradient characterization of resistive elements in the limit of a memoryless memconductance.

\subsection{Gradient memristive networks}
Definitions and results in the previous section were given for single-input single-output (SISO) elements, but they are easily extended to 
the multiple-input multiple-output (MIMO) case.
We take $\bV = [\bv_1 \dots \bv_n ]^\tran \in \calL_{2(-\infty,T]}^n$ and $\bI = [\bi_1 \dots \bi_n ]^\tran \in \calL_{2(-\infty,T]}^n$, for some $n \in \mathbb{N}$. 
Then, a memristive network is represented by \eqref{eq:memristor}, which reads 
\begin{equation}
\label{eq:memristor_network}
\bI(t) \, = \, \bG(\bv)(t) \, \bV(t),
\end{equation}
where
\begin{align}
\label{eq:memnetwork_coordinates}
\bG \in \calL_{2(-\infty,T]}^n \to \calL_{\infty}^{n\times n}.
\end{align}
That is, at time $t$, the $\bi_k$ current is given by
\begin{align} 
i_k = \sum_{j=1}^n G_{kj}(\bV) v_j, 
\qquad k \in \{1,\dots,n\}.
\end{align}

We retain the two fundamental assumptions of fading memory and positive memconductance, which for a network means
\begin{assume}  
\label{assumption:memnetwork_causality_and_invertibility}
For all $\bv \in \calL_{2(-\infty,T]}^n$ 
and $t\in (-\infty,T]$,
\begin{subequations}
\begin{align}
 \label{eq:memnetwork_assume_invertibility} 
\bG(\bv)(t)^\tran = \bG(\bv)(t) & \succ \varepsilon 
\end{align}
for some $0 < \varepsilon \in \real$.
\end{subequations}
\end{assume}
The memristive network is thus passive. 
Given the standard inner product 
\begin{equation}
\inner{\bv_1}{\bv_2} = 
\int_{-\infty}^T\bv_1(t)^\tran \bv_2(t)  \dd{t}
\end{equation}
where $\bv_1, \bv_2 \in \calL_{2(-\infty,T]}^n$, 
for all $\bv\in\calL_{2(-\infty,T]}^n$, we have
\begin{align}
\inner{\bi}{\bv} = \int_{-\infty}^T \bv(t)^\tran \bG(\bv)(t) \bv(t)  \dd{t} \geq 0.
\end{align}

To extend the results of Section \ref{section:memristive_elements}, 
we consider the 
Riemannian manifold $(\calL_{2(-\infty,T]}^n,\frac{1}{\bG})$
where $\frac{1}{\bG}$ denotes the
Riemannian tensor that satisfies
\begin{equation}
\frac{1}{\bG}(\bv)(t) = \bG(\bv)(t)^{-1},
\end{equation}
for all $\bv\in \calL_{2(-\infty,T]}^n$ and all $t\in \calL_{2(-\infty,T]}$. 
At each $\bv \in \calL_{2(-\infty,T]}^n$,
this leads to the inner product
\begin{equation}
\inner{\bv_1}{\bv_2}_{\frac{1}{\bG}} = 
\int_{-\infty}^T \bv_1(t)^\tran \bG(\bv)(t)^{-1} \bv_2(t) \dd{t}
\end{equation}
where $\bv_1,\bv_2 \in T_\bv \calL_{2(-\infty,T]}^n$
are generic tangent vectors. 

In this setting, 
dissipation $\calD_{\frac{1}{\bG}}$ and
co-content $\calE_{\frac{1}{\bG}}$ are given by 
\eqref{eq:memristor_dissipation} and
\eqref{eq:memristor_cocontent}, respectively
(with notation adapted to the metric $\frac{1}{\bG}$).
In addition, 
Theorem \ref{theorem:memresistive_gradient}
holds for memristive networks,
by replacing
$(\calL_{2(-\infty,T]},\frac{1}{\bg})$
with
$(\calL_{2(-\infty,T]}^n,\frac{1}{\bG})$.

\section{Gradient memRC circuits}

\subsection{Gradient modeling of RC circuits}
Consider a simple RC circuit 
obtained by attaching a capacitor to a resistive element and a current source.
Considering voltages $v\in\real$ and currents $i\in \real$,
and taking the resistive co-content
$\calE = \frac{1}{2} g v^2$, $g\in \real_{>0}$,
the resulting equation is the gradient system
\begin{align}
C \dot v 
=  - gv + i  
=   -\grad  \calE(v) + \grad \inner{v}{i}.
\end{align}
$C$ is the capacitance.
For constant currents $i$, the voltage descends the potential
\begin{equation}
\calP(v) = \calE(v) - \inner{v}{i},
\end{equation} and eventually
converges to a critical point (if the potential is bounded from below).
Namely, for $\gamma = \frac{1}{C}$,
\begin{align}
\dot{\calP}(v) 
= D_{\dot{v}} \calP(v)= 
\inner{\grad \calP(v)}{\dot{v}}
= - \gamma \grad \calP(v)^2.
\end{align}

The gradient nature of RC circuits is a general property and has long been acknowledged in nonlinear circuit modelling. Key references include the early work of Brayton and Moser \cite{brayton1964}, the formulation of input-output gradient systems \cite{brockett1977}, and the important connections between input-output gradient and hamiltonian modeling \cite{vanderschaft2011}.  The expository article \cite{Jeltsema2009} and the textbook \cite{vanderschaft2014} provide an extensive treatment both for RC and RLC circuits. 

Networks containing only capacitors, resistive elements, and current sources, 
obey the same gradient dynamics.
Take $V,I \in \realn$ for some $n \in \mathbb{N}$
and consider the resistive co-content $\calE = \frac{1}{2} V^\tran G V$
with $G \succ 0$ diagonal. We get
\begin{align}
    C \dot V =  - \grad  \calE(V) + \grad \inner{V}{I}.
\end{align}
where $C \succ 0$ is a diagonal matrix of capacitances.
The $j$-th element $v_j$ of $V$ satisfies
\begin{align}
    C_j \dot v_j =  - \grad  \calE_{j} (v_j) + \grad \inner{v_k}{i_k}.
\end{align}
for $\calE_{j} (v_j) = \frac{1}{2} g_{j} v_j^2$.
In fact, the co-content is summed over all the resistive elements: 
$\calE (V) = \sum_{j=1}^n \calE_{j} (v_j)$.

As explained in \cite{Jeltsema2009}, the equation can be regarded as a (very) special case of Lagrangian or Hamiltonian modeling where both the co-Lagrangian ${\calL}^*(V)$ and the co-Hamiltonian ${\calH}^*(V)$ reduce to the total stored
capacitive co-energy
\begin{align}
 {\calL}^*(V) = {\calH}^*(V) = \sum_{j=1}^n \frac{1}{2} C_jv_j^2. 
\end{align}
The circuit equations reduce to
\begin{align}
\frac{\dd{}}{\dd{t}}  \grad {\calH}^*(V) 
=  - \grad  \calE(V) + \grad \inner{V}{I}.
\end{align}

A notable example of nonlinear RC circuits in neural networks modeling is provided by Hopfield neural networks \cite{hopfield1984}. 
The circuit is made of $n$ neurons that obey the differential equation
\begin{align}
C_k \dot v_k = -\sum_{j=1}^n w_{kj} \Phi_j (v_j) + i_k, \; 1,\dots,n 
\end{align}
which can be written in the vector form
\begin{align}
\label{eq:hopefield}
C \dot V = -W \Phi(V) + I 
\end{align}
where $V, I \in \realn$, $C = \mathrm{diag}(C_1,\dots,C_n)$ 
$W$ is a symmetric weight matrix with elements $w_{kj}$, and 
$\Phi(V)$ is a vector with elements $\Phi_k(v_k)$. 
It is well-known that Hopfield networks are gradient systems. 

If we assume $W \succ 0$, we can interpret 
the Hopfield potential as the resistive co-content
\begin{align}
\calE(V) = \frac{1}{2}\Phi(V)^TW\Phi(V).
\end{align}
In addition, if the element $\Phi_k(v_k)$ are monotone, that is, 
$\frac{\dd{}}{\dd{v_k}} \Phi_k(v_k) > 0$ for all $k$, 
we can define the metric
\begin{align}
G(V) = \mathrm{diag}\left(\frac{\dd{}}{\dd{v}_1} \Phi_1(v_1), \dots, \frac{\dd{}}{\dd{v}_n} \Phi_n(v_n)\right).
\end{align}
Then, \eqref{eq:hopefield} reads
\begin{align}
\label{eq:hopefield2}
 C \dot{V} = -\grad_{G} \calE(V) + \grad \inner{V}{I}.
\end{align}

\subsection{From RC to memRC circuits}

To construct the gradient model of a  memRC circuit, we will now mimick the construction of Section \ref{section:memresistive_elements_characterization}, replacing the resistive gradient vector field in the space of voltages by memristive gradient operators in the space of past voltage trajectories $\calL_{2(-\infty,T]}$.

An elementary memRC circuit is obtained by attaching a capacitor to a memristive 
element and a current source.
The memristive element satisfies 
\eqref{eq:memristor}, for some given
memconductance $\bg$. Then, 
given the cocontent
$\calE_{\frac{1}{\bg}}(\bv) = \frac{1}{2} \inner{\bv}{\bv}$,
the resulting equation is 
\begin{equation}
\label{eq:memRC1}
    C \dot{v} = - \grad_{\frac{1}{\bg}} \calE_{\frac{1}{\bg}}(\bv) + \grad\inner{\bi}{\bv},
\end{equation}
where we have dropped the dependence on time for simplicity, using non-bold symbols
to denote signals and operators evaluated at time $t$ (e.g., $v$ instead $\bv(t)$,
and $\grad$ instead of $\bgrad\dots(t)$).

For any given current 
$\bi \in \calL_{2(-\infty,T]}$, \eqref{eq:memRC1} can be considered as an
equation with unknown $\bv \in \calL_{2(-\infty,T]}$. Solutions 
are the zero of 
\begin{equation}
\label{eq:memRC2}
    \left( C \frac {\dd{}}{\dd{t}} + \bgrad_{\frac{1}{\bg}} \calE_{\frac{1}{\bg}} (\cdot) - \bgrad\inner{\cdot}{\bi} \right)(\bv)  = 0.
\end{equation}
\eqref{eq:memRC2} makes clear that every
solution $\bv$ of the memRC circuit must balance three main factors, namely the
stored capacitive energy, the ${\frac{1}{\bg}}$-weighted resistive energy, and the supplied energy.
The energy
stored in the capacitor at time $t$ is 
$\calS_C(\bv(t)) = \frac{1}{2} C \bv(t)^2$. We have
\begin{align}
S_C(\bv(T)) - S_C(\bv(-\infty))
&= \inner{\bv}{C\dot{\bv}} \nonumber \\
&= \inner{\bv}{\bgrad_\bg \calE(\bv) - \bgrad\inner{\bv}{\bi}} \nonumber \\
&= \inner{\bv}{\bgrad_\bg \calE(\bv) - \bi}
\end{align}
that is, every solution $\bv$ satisfies
\begin{align}
\label{eq:energy_balance}
\inner{\bv}{C\dot{\bv} + \bgrad_\bg \calE(\bv) - \bi} = 0,
\end{align}
and the latter further simplifies to
\begin{align}
\label{eq:simplified_energy_balance}
\inner{\bv}{\bgrad_\bg \calE(\bv) - \bi} = 0.
\end{align}
for all closed voltage trajectories, 
namely these trajectories that satisfy $\lim_{t\to -\infty} \bv(t) = \lim_{t\to T} \bv(t)$ (for example, a trajectory that start and ends at `rest').

\eqref{eq:memRC2}
shows that the memRC circuit
models a {\it gradient behavior} in the space of past trajectories.
The solutions of the memRC circuit are critical points of quadratics, as in the classical resistive case.

\section{Hodgkin-Huxley model}

We use memristive modeling
to represent the neuronal dynamics of
conductance-based models, such as
Hodgkin-Huxley neurons \cite{Keener2009a}. 
As illustrated in Figure \ref{fig:HH_neuron},
the equivalent circuit of a neuron 
is given by a capacitor, whose
voltage is regulated by 
parallel branches of a memconductances
in series with batteries. The
voltage equation reads
\begin{subequations}
\label{eq:cb_neuron}
\begin{align}
\label{eq:cb_neuron_membrane}
C \dot{v} = -i_L - i_E - i_I + i
\end{align}
where $C$ is the membrane capacitance.
The currents are organized into four functional groups, namely \emph{leak} $L$, \emph{excitatory} $E$, \emph{inhibitory} $I$, and \emph{externally supplied}, 
We model leak, excitatory, and inhibitory currents
with memristive elements \eqref{eq:memristor},
\label{eq:cb_neuron_current}
\begin{align}
i_L &= g_L(\bv-\bar{\bv}_L) (v - \bar{v}_L) 
\label{eq:cb_neuron_leak}\\
i_E &= g_E(\bv-\bar{\bv}_E) (v - \bar{v}_E)  
\label{eq:cb_neuron_excitatory}\\
i_I &= g_I(\bv-\bar{\bv}_I) (v - \bar{v}_I)  
\label{eq:cb_neuron_inhibitory}
\end{align}
\end{subequations}
where
$\bg_L,\bg_E,\bg_I \in \calL_{2(-\infty,T]} \to \calL_\infty$ satisfy Assumption \ref{assumption:causality_and_invertibility},
and $\bar{v}_L$, $\bar{v}_E$, $\bar{v}_I$ represent the constant voltage of the batteries in series to the memconductances.

\begin{figure}[htbp]
    \centering
    \includegraphics[width=0.66\columnwidth]{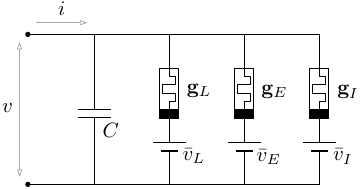}
    \caption{Conductance-based neuron equivalent circuit.}
    \label{fig:HH_neuron}
\end{figure}

\begin{exx}
In the Hodgkin-Huxley model of a neuron we would have $\bar{v}_I < \bar{v}_L < \bar{v}_E$. Furthermore,
\begin{subequations}
\label{eq:HH_conductances}
\begin{align}
g_L = \bar{g}_L
\quad
g_E = \bar{g}_E m^3h
\quad
g_I = \bar{g}_i n
\end{align}
for constant $\bar{g}_L,\bar{g}_E,\bar{g}_I \in \real_{>0}$ and 
\begin{align}
\tau_m(v)\dot{m} &= -m + m_\infty(v)   \\
\tau_h(v)\dot{h} &= -h + h_\infty(v) \\
\tau_n(v)\dot{n} &= -n + n_\infty(v),
\end{align}
\end{subequations}
where 
$\tau_{m}, \tau_{h}, \tau_{n}$ are
voltage-dependent time constants,
and
$m_\infty, h_\infty, n_\infty$ 
are static voltage-dependent nonlinear functions, \cite{Keener2009a}.  \hfill $\lrcorner$
\end{exx}

Considering the resistive co-contents
\begin{align}
\label{eq:cb_neuron_three_cocontents}
\calE_{{\frac{1}{\bg_L}}} =
\calE_{{\frac{1}{\bg_E}}} = 
\calE_{{\frac{1}{\bg_I}}} = 
\frac{1}{2}\inner{\bv}{\bv},
\end{align}
we can rewrite \eqref{eq:cb_neuron}
as 
\begin{align}
C \dot{v} 
= & 
-\grad_{\frac{1}{\bg_L}} \calE_{{\frac{1}{\bg_L}}}(\bv-\bar{\bv}_L)
-\grad_{\frac{1}{\bg_E}} \calE_{{\frac{1}{\bg_E}}}(\bv-\bar{\bv}_E) \nonumber \\
& -\grad_{\frac{1}{\bg_I}} \calE_{{\frac{1}{\bg_I}}}(\bv-\bar{\bv}_I)
+ \grad\inner{\bv}{\bi}.
\label{eq:cb_neuron2}
\end{align}
Take now
$\one^\tran = [1, 1, 1]$
and define
\begin{subequations}
\begin{align}
\bar{\bV}^\tran &= 
\mymatrix{ccc}
{\bar{\bv}_L & \bar{\bv}_E & \bar{\bv}_I}, \\
\frac{1}{\bG} &=
\mathrm{diag}\left(
\frac{1}{\bg_L},
\frac{1}{\bg_E},
\frac{1}{\bg_I}
\right), \\
\calE_{\frac{1}{\bG}} & = 
\mathrm{diag}\left(
\calE_{{\frac{1}{\bg_L}}},
\calE_{{\frac{1}{\bg_E}}},
\calE_{{\frac{1}{\bg_I}}}
\right).
\end{align}
Then, \eqref{eq:cb_neuron2} reads
\begin{align}
\label{eq:gradient_cb_neuron}
C \dot{v} 
= \one^\tran \grad_{\frac{1}{\bG}}
\calE_{\frac{1}{\bG}}(\one \bv - \bar{\bV}) 
+ \grad\inner{\bv}{\bi}.
\end{align}
\end{subequations}
Solutions to \eqref{eq:gradient_cb_neuron},
such as a voltage spike,
must balance stored capacitive energy, the $\frac{1}{\bG}$-weighted resistive energy, and the supplied energy \eqref{eq:energy_balance}. In fact, 
any spike starts and ends at rest, therefore satisfies the simplified balance \eqref{eq:simplified_energy_balance}.

The gradient form \eqref{eq:gradient_cb_neuron}
shows that the complex nonlinear 
voltage response of a neuron to an external current is a zero of the voltage equation \eqref{eq:memRC2}.
We can thus use numerical algorithms to solve
this zero-finding problem. This is beyond the scope of the current paper but a brief illustration for the Hodgkin-Huxley neuron \eqref{eq:cb_neuron} is provided in the following example.

\begin{exx}
We solve \eqref{eq:cb_neuron} with a two-step
iteration. 

\noindent\textbf{Compute(g):} for a given voltage $\bv$, we compute $\bg_L(\bv-\bar{\bv}_L)$, $\bg_E(\bv-\bar{\bv}_E)$, and 
$\bg_I(\bv-\bar{\bv}_I)$ using 
\eqref{eq:HH_conductances}.

\noindent \textbf{Compute(v):} we compute the new voltage $\bv^+$ 
by solving \eqref{eq:cb_neuron} using the 
values for the memconductances computed above and the input current $\bi$.

The iteration is initialized with constant signals $\bv$, $\mathbf{m}$, $\mathbf{h}$, $\mathbf{n}$ related to
the circuit at `rest'. These correspond to the circuit equilibrium for $\bi = 0$. We consider 
an interval of time $0 \mbox{ ms} \leq t \leq 50 \mbox{ ms}$. Each signal is represented by $500$ samples.

The input current is $\bi(t) = 10 mA$ for $20 \mathrm{ms} \leq t \leq 22 \mathrm{ms}$ and $0$ otherwise,  as shown in Figure \ref{fig:HH_sim}, bottom-right plot.
The top plots of Figure \ref{fig:HH_sim} show the  trajectories of the circuit obtained by standard numerical integration. The bottom-left plot shows the voltage trajectory of the circuit obtained via iteration. 
The iteration converges to the circuit solution in five steps. \hfill $\lrcorner$
\end{exx}

\begin{figure}[htbp]
    \centering
    \vspace{1mm}
    \includegraphics[width=0.49\columnwidth]{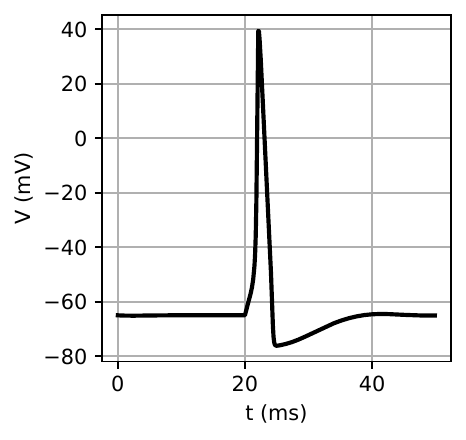}
    \includegraphics[width=0.49\columnwidth]{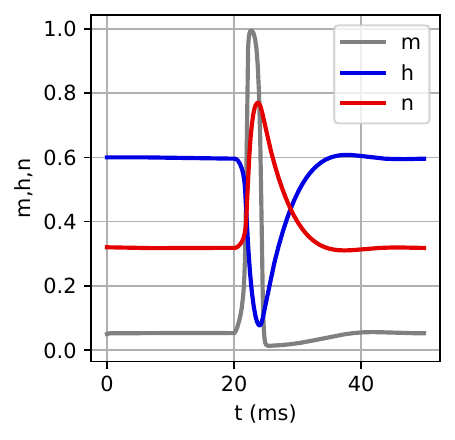}
    \includegraphics[width=0.49\columnwidth]{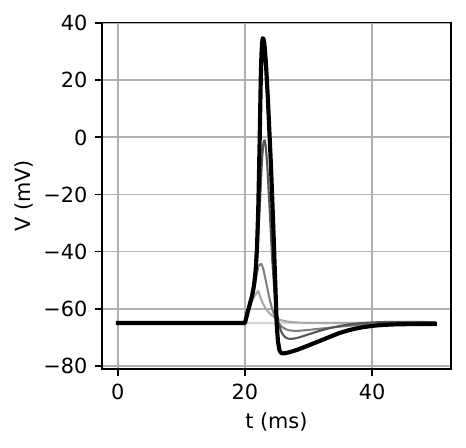}
    \includegraphics[width=0.49\columnwidth]{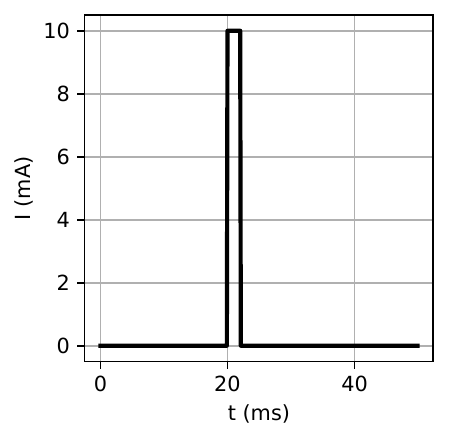}
    \caption{Hodgkin-Huxley spike. \textbf{Top:} ODE solution via nmumerical integration. \textbf{Bottom|left:} ODE solution via alternate iteration. \textbf{Bottom|right:} external current.}
    \label{fig:HH_sim}
\end{figure}

\section{Discussion}
This paper has revisited the classical memristive model of Chua and Wang \cite{chua1976} from a gradient modeling perspective. The key observation is that memristive elements are very similar to resistive elements, provided we consider the current-voltage relationship in the signal space of past trajectories rather than as an pointwise relationship in time.

The consequence is that we can also regard neurophysiological and neuromorphic behaviors as very similar to RC circuits, provided we analyze the circuit behavior as a relationship between current and voltage trajectories. For a given current trajectory, the voltage trajectory is the critical point of an energy potential. 

To the best of the authors knowledge, this viewpoint is novel and it paves the way to new methodologies for the analysis and design of neurophysiological and neuromorphic behaviors. 

The proposed viewpoint is aligned with the operator-theoretic perspective recently developed to simulate neuromorphic circuits by regarding their solutions as zeros of monotone (or difference of monotone) operators and by applying splitting methods that exploit the parallel topology of the circuit \cite{chaffey2023,shahhosseini2024}.

More generally, an important outcome of the proposed gradient modelling approach is a decoupling between the memconductance modeling and the memRC circuit modeling. The memconductance is a fading memory operator from voltage to memconductances, which can be efficiently modelled as a ``feedforward'' nonlinear convolution operator. The memRC circuit is a ``feedback'' nonlinear circuit, but simple to analyze because of its gradient nature. The decoupling matches the separation of two physical domains: the biochemical process that governs the gating of the conductance, and the electrical process that interconnects the neurons via current flows. The first process is complex but feedforward. The second process is recurrent but simple.

Future research will explore how to scale this modeling framework to large neural networks. This will be achieved by considering finite dimensional subspaces of $\calL_{2(-\infty,T]}^n$ in order to analyse the circuit {\it at scale}.

\balance
\bibliographystyle{IEEEtran}

\end{document}